\documentclass[12pt]{article}

\usepackage{graphicx, amssymb, amsmath, amsthm, braket, tikz, color, subcaption, cite, setspace}
\usepackage[colorlinks = true]{hyperref}
\usepackage[margin = 2cm]{geometry}
%\onehalfspacing
	
\newtheorem{lemma}{Lemma}
\newtheorem{theorem}{Theorem}
\newtheorem{corollary}{Corollary}

\title{Perfect state transfer using Markovian quantum walk}
\author{
	Supriyo Dutta \\ 
	\small{Department of Mathematics,}\\ 
	\small{National Institute of Technology Agartala,} \\ 
	\small{Jirania, West Tripura, India.}\\ 
	\small{\texttt{dosupriyo@gmail.com}}}
\date{}

\begin{document}

	\maketitle 

	\begin{abstract}
		The quantum Perfect State Transfer (PST) is a fundamental tool of quantum communication in a network. It is not easy to achieve in practice. The original idea of PST depends on the fundamentals of the continuous-time quantum walk. A path graph with at most three vertices allows PST based on continuous-time quantum walk. Based on the Markovian quantum walk, we introduce a significantly powerful method for PST in this article. We establish PST between the extreme vertices of a path graph of arbitrary length. Moreover, any pair of symmetric vertices in a path graph allows PST under Markovian quantum walks. We extend our investigations for the cycle graphs. The cycle graphs with more than $4$ vertices do not allow the PST based on the continuous-time quantum walk. In contrast, a cycle graph with $2m$ vertices exhibits PST based on Markovian quantum walk between the vertices $j$ and $j + m$ for $j = 0, 1, \dots (m - 1)$, where $m > 0$ is an integer.\\
		\textbf{Keywords}: Quantum communication, Quantum perfect state transfer, Szegedy quantum walk.
	\end{abstract}

	\tableofcontents
	
	\begin{onehalfspace}
		
	\newpage 
	\section{Introduction}

		The transfer of a quantum state from one location in a quantum network to another without interrupting the encoded information is a crucial task in quantum information processing \cite{benenti2019principles, wilde2013quantum, bose2003quantum, christandl2004perfect, chapman2016experimental, osborne2006statics, plenio2004dynamics, nikolopoulos2004electron, divincenzo2000physical}. Considering the advantage of coupling between neighbouring qubits, we can physically transport quantum information across a network, which leads us to the idea of quantum Perfect State Transfer (PST) \cite{yung2006quantum, christandl2005perfect}. In quantum computing, quantum walks have been a successful tool with many algorithmic applications. The problem of perfect state transfer in quantum walks is a graph-theoretic in nature. Therefore, we begin this article with some basic definitions of graph theory.
		
		A graph $G = (V(G), E(G))$ is a combination of a vertex set $V(G)$ and an edge set $E(G) \subset V(G) \times V(G)$ \cite{west2001introduction}. An edge is said to be a directed edge if there is a direction in it. A directed graph is a graph whose all edges are directed edges. An edge joining a vertex with itself is said to be a loop. A simple graph has no loops and no directed edges in it. We can convert a simple graph to a directed graph by assigning two opposite directions on every edge. If a graph $G$ contains $n$ vertices, we mark them with $0, 1, \dots (n - 1)$.  The adjacency matrix of a graph $G$ \cite{bapat2010graphs} $A(G) = (a_{i, j})_{n \times n}$ is represented by 
		\begin{equation}
			a_{i, j} = \begin{cases} 1 & ~\text{if}~ (i, j) \in E(G), \\0 &  ~\text{if}~ (i, j) \notin E(G). \end{cases} 
		\end{equation}
		We utilize two simple graphs in this article, which are the path and cycle graphs. A path graph with $n$ vertices is denoted by $P_n$ with vertices $0, 1, \dots (n - 1)$ and edges $\{(0, 1), (1, 2), \dots ((n - 2), (n - 1))\}$. Here, $0$ and $(n - 1)$ are the extreme vertices of $P_n$. A cycle graph $C_n$ is a graph $n$ vertices $0, 1, \dots (n - 1)$ and edges $\{(0, 1), (1, 2), \dots ((n - 2), (n - 1)), ((n - 1), 0)\}$.
		
		In the literature of quantum information and computation, there are different proposals of quantum walk on graphs, for instance, the continuous-time quantum walk \cite{farhi1998quantum, childs2009universal, mulken2011continuous, godsil2008periodic}, discrete-time quantum walk \cite{aharonov2001quantum, ambainis2003quantum}, Szegedy quantum walk \cite{szegedy2004quantum}, and many others \cite{portugal2013quantum, venegas2012quantum, konno2008quantum, ambainis2007quantum, du2011search, buhrman2004quantum}. The dynamics of each of these quantum walks is determined by a unitary matrix, for instance, the unitary matrix $\exp(\iota A(G) t)$ determines continuous-time quantum walk on a graph. Let $\ket{j} \in \mathcal{H}^n$ be the initial state of a quantum walker starting from vertex $j$ at time $t = 0$. After time $t$ state of the walker is $\exp(\iota A(G) t) \ket{j}$. The continuous-time quantum walk generates a PST between the vertex $j$ and $k$ at time $t_0$ \cite{kendon2011perfect, godsil2012state, godsil2012number, cheung2011perfect} if we have 
		\begin{equation}\label{continuous_time_walkbased_PST}
			\ket{k} = \exp(\iota A(G) t_0) \ket{j}.
		\end{equation} 
		The idea of PST is generalized in a similar fashion for other quantum walks \cite{kurzynski2011discrete, zhan2014perfect, chan2021pretty, shang2019quantum}. 
		
		In this article, we discuss PST related to the Markov chain-based quantum walk \cite{segawa2013localization, balu2017probability} which is a variant of the Szegedy quantum walk. The original idea of Szegedy quantum walk \cite{szegedy2004quantum, szegedy2004spectra} defines reflection operators on a bipartite graph. The Szegedy quantum walk not only quantizes the Markov chains, but it also provides a quadratic speedup in the hitting time, which is the required time for finding a marked vertex in a graph. This model of quantum walks is a fundamental tool for complex quantum algorithms, for instance, the element distinctness, triangle finding, etc.  In addition, recent research \cite{portugal2017staggered} suggests that the physical implementation of these walks may be conducted by time-independent Hamiltonians. Hence, the Szegedy quantum walks can be more easily mapped to certain physical systems, which include the superconducting resonators or atomic lattices, compared to the other models. The Markovian quantum walk is a variant of Szegedy quantum walk, which does not need a bipartite graph. Therefore, we coin it with a different name.
		
		Different graphs satisfy PST related to different quantum walks. Not all simple graphs support PST. For instance, the path graphs $P_2$ and $P_3$ allow PST based on continuous-time quantum walk between its extreme vertices. When the number of vertices $n > 3$, the path graph $P_n$ does not allow PST based on continuous-time quantum walk \cite{christandl2004perfect}. The idea of quantum walks on cycle graphs is a well-studied topic in literature \cite{adamczak2007non, travaglione2002implementing, tregenna2003controlling}. Only the cycle graph with four vertices allows PST based on the continuous-time quantum walk \cite{christandl2004perfect, kendon2011perfect}. There are many other graphs with more edges supporting PST in a greater distance, such as hypercube graphs, cubelike graphs \cite{cheung2011perfect}, Cayley graphs \cite{pal2017perfect, tan2019perfect}, etc. 
		The path graph $P_n$ has $(n - 1)$ edges which is the minimum number of edges to connect two extreme vertices at the maximum distance. Also, the cycle graph $C_n$ has $n$ edges. Therefore, the distance between the extreme vertices in $P_n$ is $n$. The maximum distance between two extreme vertices of $C_{2n}$ is $n$. Technically, the vertices are spin objects and edges are the interactions between them. It is difficult to control a large number of interactions between spin objects when we want to communicate between two vertices at a distance. This is why we consider the path and cycle graphs as the ideal candidates for performing PST.
		
		To the best of our knowledge, this is the first article in literature reporting PST based on the Markovian quantum walk on the cycle and path graphs in arbitrary distance. Whereas the other quantum walks produce PST on these graphs at a limited distance. Our observations are also quite interesting. The new PST allows us to communicate between the end vertices of a path graph of arbitrary length. Moreover, it exhibits PST between the pairs of symmetric vertices. Therefore, in cycle and path graphs PST based on the Markovian quantum walk is more efficient than the original idea of PST based on a continuous-time quantum walk in quantum communication. We extended our investigation for the cycle graph also. We prove that any cycle graph with an even number of vertices permits PST based on the Markovian quantum walk.
		
		We distribute this work as follows. In Section 2, we introduce Markovian quantum walk and PST using it. Here, we develop a basis of $\mathcal{H}^n$ different from the computational basis to perform the quantum walk. We bind up this section with three lemmas which are essential to establish PSTs. Sections 3 and 4 are dedicated to state transfer on the cycle, and the path graphs, respectively. In these sections, we work out the state after $t$ time steps of evolutions starting from an arbitrary initial vertex $j$. It assists us in identifying a pair of vertices allowing PST and periodic vertices. Then we conclude this article.

	\section{Markovian quantum walk and perfect quantum state transfer}

		The random walk \cite{lawler2010random} on a graph is a fundamental mathematical tool for modelling and simulating complex problems and natural phenomena. Szegedy developed a general method for quantizing a random walk to create a discrete-time quantum walk. To develop the idea of a quantum walk, we first generate a directed graph from a given simple graph $G$. On every edge, we assign two opposite orientations. For example, $(j, k)$ and $(k, j)$ denote two oppositely oriented edges from the vertex $j$ to $k$, and from $k$ to $j$, respectively. The out-degree of a vertex $j$ is denoted by $d_j$ which is the number of outgoing edges from $j$. We define 
		\begin{equation}\label{probability_distn}
			p_{j, k} = 
			\begin{cases} 
				\frac{1}{d_j} & ~\text{if}~ (j, k) \in E(G),\\
				0 &  ~\text{if}~ (j, k) \notin E(G);
			\end{cases}
		\end{equation}
		such that $\sum_{k \in V(G)} p_{j, k} = 1$. As $G$ is connected $d_j \neq 0$ for any $v \in V$ and $0 \leq p_{j, k} \leq 1$. The probability transition matrix $P = (p_{jk})_{n \times n}$ leads to a random walk on $G$.
		
		For every vertex $j$, we assign an element of the computational basis of $n$ dimensional vector space $\mathcal{H}^n$, say $\ket{j}$, where $\ket{j} = [0, 0, \dots 1 (j\text{-th position}) \dots 0]^\dagger$. Corresponding to a directed edge $(j, k)$ from $j$ to $k$, we assign a vector $\ket{jk} = \ket{j} \otimes \ket{k}$, where $\ket{jk} \in \mathcal{H}^n \otimes \mathcal{H}^n$. A superposition of the vectors representing the edges outgoing from the vertex $j$ is 
		\begin{equation}\label{psi_j}
			\ket{\psi_j} = \sum_{k \in V} \sqrt{p_{j, k}} \ket{jk} = \ket{j} \otimes \left(\sum_{k \in V} 	\sqrt{p_{jk}} \ket{k} \right).
		\end{equation}
		
		We can establish that the vectors in $\Psi = \{\ket{\psi_j}: j \in V(G)\}$ are orthonormal because $\braket{\psi_j | \psi_j} = 1$ and $\braket{\psi_j | \psi_k} = 0$ for $j \neq k$. Therefore, the vectors in $\Psi$ are linearly independent. The number of vectors in $\Psi$ is $n$, which is the dimension of $\mathcal{H}^n$. Thus, the vector subspace $\mathcal{L}(\Psi)$ spanned by $\Psi$ is isomorphic to $\mathcal{H}^n$. It leads us to define a linear transformation $T: \mathcal{L}(\Psi) \rightarrow \mathcal{H}^n$ such that $T \ket{\psi_j} = \ket{j}$ for $j \in V(G)$. It can be represented by a matrix $T$, such that, $T = \sum_{j \in V(G)} \ket{j} \bra{\psi_j}$. Also, $T^\dagger = \sum_{j \in V(G)} \ket{\psi_j} \bra{j}$, such that $T^\dagger \ket{j} = \ket{\psi_j}$. Therefore, any walk on $\mathcal{L}(\Psi)$ is equivalent to a walk on $\mathcal{H}^n$. It allows us to restrict our discussion on the quantum walk and state transfer on the space $\mathcal{L}(\Psi)$ only. 
		
		Corresponding to vertex $j \in V(G)$ we now assign a basis state $\ket{\psi_j} \in \Psi$. The unitary matrix leading the Markovian quantum walk \cite{segawa2013localization, balu2017probability} acting on $\mathcal{L}(\Psi)$ is defined by 
		\begin{equation}\label{unitary_matrix}
			U = S (2 \Pi - I).
		\end{equation} 
		Here, $\Pi = \sum_{j \in V} \ket{\psi_j} \bra{\psi_j}$ denotes the linear sum of all orthogonal projectors on $\Psi$. Also, $S = \sum_{j \in V} \sum_{k \in V} \ket{jk} \bra{kj}$ is a SWAP operator, and $I$ is the identity matrix of order $n^2$.
		
		Now, we are in a position to define PST based on the Markovian quantum walk. Suppose the initial state of the walker is $\ket{\psi_j}$, which starts at vertex $j$ at time $t = 0$. Using Born rule, we state that the probability of getting the walker at vertex $k$ is $P_t(k) = |\braket{\psi_k | U^t | \psi_j}|^2$, after time $t$. We also say that there is a PST between vertices $j$ and $k$ at time $t$ if 
		\begin{equation}
			\ket{\psi_k} = U^t \ket{\psi_j},
		\end{equation}
		that is,  $P_t(k) = 1$. We define a vertex $j \in V$ is a periodic vertex at time $t$ if $\ket{\psi_j} = U^t \ket{\psi_j}$. 
		
		Below we mention three lemmas which illuminate the characteristics of the evolution operator $U$. In the next section, we utilize them in our calculations.
		
		\begin{lemma} \label{App_Lemma_1} 
			Let $G$ be a path or a cycle graph with vertices $0, 1, 2, \dots (n - 1)$. For $j \neq 0$ and $j \neq (n - 1)$ define $\ket{\psi_j} = \frac{1}{\sqrt{2}} \left( \ket{j(j + 1)} + \ket{j(j - 1)} \right)$ and $\Pi = \sum_{j \in V} \ket{\psi_j} \bra{\psi_j}$. Then
			$$S(2\Pi - I) \ket{\psi_j} = \frac{1}{\sqrt{2}} \left(\ket{(j + 1)j} + \ket{(j - 1)j} \right).$$ 
		\end{lemma}
		\begin{proof}
			\begin{equation}
				\begin{split}
					& (2\Pi - I) \ket{\psi_j} = 2\Pi \ket{\psi_j} - \ket{\psi_j} = 2\ket{\psi_j} \braket{\psi_j | \psi_j} - \ket{\psi_j} = 2\ket{\psi_j} - \ket{\psi_j} = \ket{\psi_j} \\
					\text{or}~ & S (2\Pi - I) \ket{\psi_j} = S \ket{\psi_j} = \frac{1}{\sqrt{2}} S (\ket{j(j+1)} + \ket{j(j - 1)}) = \frac{1}{\sqrt{2}}(\ket{(j+1)j} + \ket{(j - 1)j}).
				\end{split}
			\end{equation}
		\end{proof}
		
		\begin{lemma}\label{App_Lemma_2} 
			Let $G$ be a path or a cycle graph with vertices $0, 1, 2, \dots (n - 1)$. For $j \neq (n - 1)$ we have 
			$$S(2\Pi - I) \ket{(j + 1) j} = \ket{(j + 2)(j + 1)}.$$
			Here, $\ket{\psi_j} = \frac{1}{\sqrt{2}} \left( \ket{j(j + 1)} + \ket{j(j - 1)} \right)$ and $\Pi = \sum_{j \in V} \ket{\psi_j} \bra{\psi_j}$.
		\end{lemma}
		\begin{proof}
			\begin{equation}
				\begin{split}
					(2\Pi - I) \ket{(j + 1) j} & = 2\Pi \ket{(j + 1) j} - I \ket{(j + 1) j} \\
					& = 2 \ket{\psi_{(j + 1)}} \braket{\psi_{(j + 1)}| (j + 1) j} - \ket{(j + 1) j} = 2 \frac{1}{\sqrt{2}} \ket{\psi_{(j + 1)}}- \ket{(j + 1) j},
				\end{split}
			\end{equation}
			since $\braket{\psi_{(j + 1)}| (j + 1) j} = \frac{1}{\sqrt{2}}$. Applying equation (\ref{psi_j}) we have
			\begin{equation}
				\begin{split}
					& S(2\Pi - I) \ket{(j + 1) j} = S \left( 2 \frac{1}{\sqrt{2}} \frac{1}{\sqrt{2}}(\ket{(j+1)(j+2)} + \ket{(j + 1)j}) - \ket{(j + 1) j} \right) \\
					& = S \left( \ket{(j+1)(j+2)} + \ket{(j + 1)j} - \ket{(j + 1) j} \right) = S\ket{(j+1)(j+2)} = \ket{(j + 2)(j + 1)}.
				\end{split}
			\end{equation}
		\end{proof}
		
		\begin{lemma}\label{App_Lemma_3} 
			Let $G$ be a path or a cycle graph with vertices $0, 1, 2, \dots (n - 1)$. For $j \neq 0$ we have 
			$$S(2\Pi - I) \ket{(j - 1) j} = \ket{(j - 2)(j - 1)}.$$
			Here, $\ket{\psi_j} = \frac{1}{\sqrt{2}} \left( \ket{j(j + 1)} + \ket{j(j - 1)} \right)$ and $\Pi = \sum_{j \in V} \ket{\psi_j} \bra{\psi_j}$.
		\end{lemma}
		\begin{proof}
			\begin{equation}
				\begin{split}
					(2\Pi - I) \ket{(j - 1) j} & = 2\Pi \ket{(j - 1) j} - \ket{(j - 1) j} \\
					& = 2 \ket{\psi_{(j - 1)}} \braket{\psi_{(j - 1)}| (j - 1) j} - \ket{(j - 1) j} = 2 \frac{1}{\sqrt{2}} \ket{\psi_{(j - 1)}}- \ket{(j - 1) j},
				\end{split}
			\end{equation}
			since $\braket{\psi_{(j - 1)}| (j - 1) j} = \frac{1}{\sqrt{2}}$. Applying equation (\ref{psi_j}) we have
			\begin{equation}
				\begin{split}
					& S(2\Pi - I) \ket{(j - 1) j} = S \left( 2 \frac{1}{\sqrt{2}} \frac{1}{\sqrt{2}}(\ket{(j - 1)j} + \ket{(j - 1)(j - 2)}) - 	\ket{(j - 1) j} \right) \\
					& = S \left( \ket{(j-1)(j-2)} + \ket{(j - 1)j} - \ket{(j - 1) j} \right) = S \ket{(j-1)(j-2)} = \ket{(j - 2)(j - 1)}.
				\end{split}
			\end{equation}
		\end{proof}

		\section{State transfer on cycle graph}
		\label{PST_cycle}
		
		A cycle graph $C_n$ with $n$ vertices is given by a vertex set $V(C_n) = \{0, 1, 2, \dots (n - 1)\}$ and an edges set $E(C_n) = \{((j - 1), j): j = 1, 2, \dots (n - 1)\} \cup \{((n - 1), 0)\}$. In other words, two vertices $j$ and $k$ are connected in $C_n$ if and only if $j - k \equiv 1 (\mod n)$. Therefore, for every vertex $j$ we can define 
		\begin{equation}
			p_{j, k} = \begin{cases} \frac{1}{2}  & ~\text{if}~ j - k \equiv 1 (\mod n), \\ 0 & ~\text{otherwise.} 	\end{cases} 
		\end{equation}
		For example, consider a cycle graph with $6$ vertices $C_6$ depicted in Figure \ref{undirected_6_cycle}. Considering two opposite orientations on $C_6$ we get a directed graph depicted in Figure \ref{cycle_6}. 
		\begin{figure}
			\begin{subfigure}[t]{.48\textwidth}
				\centering 
				\begin{tikzpicture}
					\draw [fill] (0, 0) circle [radius=0.05 cm];
					\node [below] at (0, 0) {$0$};
					\draw [fill] (1, 1) circle [radius=0.1 cm];
					\node [below right] at (1, 1) {$1$};
					\draw [fill] (-1, 1) circle [radius=0.1 cm];
					\node [below left] at (-1, 1) {$5$};
					\draw [fill] (1, 2) circle [radius=0.1 cm];
					\node [above right] at (1, 2) {$2$};
					\draw [fill] (-1, 2) circle [radius=0.1 cm];
					\node [above left] at (-1, 2) {$4$};
					\draw [fill] (0, 3) circle [radius=0.1 cm];
					\node [above] at (0, 3) {$3$};
					\draw (0, 0) -- (1, 1) -- (1, 2) -- (0, 3) -- (-1, 2) -- (-1, 1) -- (0, 0);
				\end{tikzpicture}
				\caption{}
				\label{undirected_6_cycle}
			\end{subfigure}
			\hspace{.5cm} 	
			\begin{subfigure}[t]{.48\textwidth}
				\centering 
				\begin{tikzpicture}
					\draw [fill] (0, 0) circle 	[radius=0.05 cm];
					\node [below] at (0, 0) 	{$0$};
					\draw [fill] (1, 1) circle 	[radius=0.05 cm];
					\node [below right] at (1, 1) {$1$};
					\draw [fill] (-1, 1) 	circle [radius=0.05 cm];
					\node [below left] at (-1, 1) {$5$};
					\draw [fill] (1, 2) circle [radius=0.05 cm];
					\node [above right] at (1, 2) {$2$};
					\draw [fill] (-1, 2) 	circle [radius=0.05 cm];
					\node [above left] at (-1, 2) {$4$};
					\draw [fill] (0, 3) circle [radius=0.05 cm];
					\node [above] at (0, 3) {$3$};
					\draw [out=0,in= -90, ->]  (0, 0) to  (1, .9);
					\draw [out = 210, in = 60, ->] (1, 1) to (0, .1);
					\draw [out = 45, in = 315, ->] (1, 1) to (1.05, 1.95);
					\draw [out = 225, in = 	135, ->] (1, 2) to (1, 1.1);
					\draw [out = 90, in = 0, ->] (1, 2) to (.1, 3);
					\draw [out = 300, in = 	150, ->] (0, 3) to (.9, 2);
					\draw [out = 180, in = 90, ->] (0, 3) to (-1, 2.1);
					\draw [out = 30, in = 240, ->] (-1, 2) to (0, 2.9);
					\draw [out = 225, in = 	135, ->] (-1, 2) to (-1, 1.1);
					\draw [out = 45, in = 315, ->] (-1, 1) to (-.95, 1.95);
					\draw [out = 270, in = 	180, ->] (-1, 1) to (-.1, 0);
					\draw [out = 120, in = 	330, ->] (0, 0) to (-.95, .95);
				\end{tikzpicture}
				\caption{}
				\label{cycle_6}
			\end{subfigure}
			\caption{Sub-figure (\ref{undirected_6_cycle}) presents a cycle graph $C_6$ with $6$ vertices. Note that, there is an edge between the vertices $j$ and $k$ if and only if $j - k \equiv 1(\mod 6)$. Figure (\ref{cycle_6}) is $C_6$ after orientation on the edges. Every edge in sub-figure (\ref{undirected_6_cycle}) generates two oppositely oriented edges in sub-figure (\ref{cycle_6}). Thus, every vertex is incident to two incoming edges and two outgoing edges, which means the out-degree of every vertex is $2$. Hence, $p_{j, k} = \frac{1}{2}$ for all possible edges $(j, k)$ in the cycle graph.}
		\end{figure}
		We can specify equation (\ref{psi_j}) for the cycle graphs to write the state corresponding to vertex $j$, which is as follows:
		\begin{equation}
			\ket{\psi_j} = \begin{cases} \frac{1}{\sqrt{2}}(\ket{01} + \ket{0(n - 1)}) & ~\text{if}~ j = 0; \\ \frac{1}{\sqrt{2}} \ket{j(j + 1)} + \ket{j(j - 1)} & ~\text{if}~ j= 1, 2, \dots (n - 2); \\ \frac{1}{\sqrt{2}}(\ket{(n - 1)0} + \ket{(n - 1)(n - 2)}) & ~\text{if}~ j = (n - 1); \end{cases}
		\end{equation}
		for any cycle $C_n$. The above equation can be written more precisely 
		\begin{equation}\label{psi_j_cycle}
			\ket{\psi_j} = \frac{1}{\sqrt{2}} (\ket{j(j \oplus 1)} + \ket{j(j \ominus 1)}),
		\end{equation}
		where $\oplus$ and $\ominus$ denote addition modulo $n$ and subtraction modulo $n$, respectively, throughout the text. Now, we demonstrate the following characteristic of the operator $U = S(2\Pi - I)$ for $C_n$, which will play a crucial role in PST.
		
		\begin{theorem}\label{theorem_propagation_on_cycle}
			In a cycle graph $C_n$ applying $U = S(2\Pi - I)$ on state $\ket{\psi_j}$ for $t$ times we have
			$$[S(2 \Pi - I)]^t \ket{\psi_j} = \frac{1}{\sqrt{2}} \left(\ket{(j \oplus t)(j \oplus (t - 1))} + \ket{(j \ominus t)(j \ominus (t - 1))}\right),$$
			where $j = 0, 1, 2, \dots (n - 1)$.
		\end{theorem}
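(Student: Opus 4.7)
The plan is to prove this by mathematical induction on $t$, building directly on the three identities (\ref{cycle_lemma_1}), (\ref{cycle_lemma_2}) and (\ref{cycle_lemma_3}) established just above the theorem. The base case $t=1$ is exactly equation (\ref{cycle_lemma_1}), after recognising the convention $j\oplus 0 = j\ominus 0 = j$ so that the claimed right-hand side reduces to $\frac{1}{\sqrt{2}}\left(\ket{(j\oplus 1)j} + \ket{(j\ominus 1)j}\right)$. The cases $t=2$ and $t=3$ are already worked out in the two displayed computations immediately preceding the theorem, which confirm that the inductive step will propagate cleanly.

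For the inductive step I assume
$$[S(2 \Pi - I)]^t \ket{\psi_j} = \frac{1}{\sqrt{2}} \left(\ket{(j \oplus t)(j \oplus (t - 1))} + \ket{(j \ominus t)(j \ominus (t - 1))}\right)$$
and apply $U = S(2\Pi - I)$ once more, using linearity. The key observation is a reparametrisation: writing $j' = j\oplus(t-1)$ and $j'' = j\ominus(t-1)$, associativity of modular addition gives $j\oplus t = j'\oplus 1$ and $j\ominus t = j''\ominus 1$, so the two summands on the right take the forms $\ket{(j'\oplus 1)j'}$ and $\ket{(j''\ominus 1)j''}$ to which equations (\ref{cycle_lemma_2}) and (\ref{cycle_lemma_3}) apply directly. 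They send the first summand to $\ket{(j'\oplus 2)(j'\oplus 1)} = \ket{(j\oplus(t+1))(j\oplus t)}$ and the second to $\ket{(j''\ominus 2)(j''\ominus 1)} = \ket{(j\ominus(t+1))(j\ominus t)}$, yielding precisely the formula at step $t+1$.

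The only thing to be careful about is the modular wrap-around: when $j\oplus t$ or $j\ominus t$ equals $0$ or $n-1$, the application of (\ref{cycle_lemma_2}) or (\ref{cycle_lemma_3}) must still produce the expected index shift. The text has already recast these identities in the unified $\oplus/\ominus$ notation exactly for that purpose, so no separate case analysis enters the induction itself. The main \emph{conceptual} point, which is where confusion could arise, is recognising that the two edge states in the superposition propagate independently in the clockwise and counterclockwise directions and that $U$ acts on each by shifting both coordinates by one unit in the corresponding direction; once this is seen, the induction becomes a short bookkeeping exercise.
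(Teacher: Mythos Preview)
Your proposal is correct and follows essentially the same approach as the paper: the paper computes the cases $t=1,2,3$ using equations (\ref{cycle_lemma_1}), (\ref{cycle_lemma_2}) and (\ref{cycle_lemma_3}) and then invokes mathematical induction without spelling it out, whereas you supply the explicit inductive step via the reparametrisation $j'=j\oplus(t-1)$, $j''=j\ominus(t-1)$. Your treatment is slightly more detailed but the argument is the same.
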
 
		
		\begin{proof} 
			We combine the following facts:
			\begin{enumerate}
				\item 
				Lemma \ref{App_Lemma_1} suggests that for any vertex $j$ in $C_n$ we have $S(2\Pi - I) \ket{\psi_j} = \frac{1}{\sqrt{2}} \left(\ket{(j + 1)j} + \ket{(j - 1)j} \right)$, when $j \neq 0$ and $(n - 1)$. For $j = 0$, 
				\begin{equation}
					\begin{split}
						S(2\Pi - I) \ket{\psi_0} & = S(2\Pi\ket{\psi_0} - \ket{\psi_0}) = S\ket{\psi_0} = \frac{1}{\sqrt{2}} \left(\ket{10} + \ket{(n - 1)0} \right).
					\end{split}
				\end{equation}
				Similarly, for $j = n - 1$ we can verify that 
				\begin{equation}
					S(2\Pi - I) \ket{\psi_{n - 1}} = \frac{1}{\sqrt{2}} \left(\ket{0(n - 1)} + \ket{(n - 2)(n - 1)} \right).
				\end{equation}
				Combining all these, we have
				\begin{equation}\label{cycle_lemma_1}
					S(2\Pi - I) \ket{\psi_j} = \frac{1}{\sqrt{2}} \left(\ket{(j \oplus 1)j} + \ket{(j \ominus 1)j} \right).
				\end{equation}
				\item 
				From Lemma \ref{App_Lemma_2} we state that for any vertex $j \neq 0$ and $j \neq (n -1)$ in $C_n$ we have $S(2\Pi - I) \ket{(j + 1) j} = \ket{(j + 2)(j + 1)}$. Also, for $j = n - 1$ we have 
				\begin{equation}
					\begin{split}
						S(2\Pi - I) \ket{0(n - 1)} & = S \left(2\Pi \ket{0(n - 1)} - \ket{0(n - 1)} \right) \\
						& = S\left(2\frac{1}{\sqrt{2}} \ket{\psi_0} - \ket{0(n - 1)} \right) = S\ket{01} = \ket{10}.
					\end{split}
				\end{equation}
				For $j = 0$ we have $S(2\Pi - I) \ket{1 0} = \ket{21}$. Combining we get 
				\begin{equation}\label{cycle_lemma_2}
					S(2\Pi - I) \ket{(j \oplus 1) j} = \ket{(j \oplus 2)(j \oplus 1)}.
				\end{equation}
				\item 
				Using Lemma \ref{App_Lemma_3} we can justify that $S(2\Pi - I) \ket{(j - 1) j} = \ket{(j - 2)(j - 1)}$, when $j \neq 0$ and $j \neq (n - 1)$. Considering the cases for $j = 0$ and $j = (n - 1)$ we can prove in general 
				\begin{equation}\label{cycle_lemma_3}
					S(2\Pi - I) \ket{(j \ominus 1) j} = \ket{(j \ominus 2)(j \ominus 1)}.
				\end{equation}
			\end{enumerate}
			
			Now, we are in a position to apply $S(2\Pi-I)$ multiple times on $\ket{\psi_j}$ with help of equations (\ref{cycle_lemma_1}), (\ref{cycle_lemma_2}) and (\ref{cycle_lemma_3}). Note that, 
			\begin{equation}
				\begin{split}
					[S(2 \Pi - I)]^2 \ket{\psi_j} & =  \frac{1}{\sqrt{2}} S(2\Pi - I) \left(\ket{(j \oplus 1)j} + \ket{(j \ominus 1)j} \right) \\
					& = \frac{1}{\sqrt{2}} \left( \ket{(j \oplus 2)(j \oplus 1)} + \ket{(j \ominus 2)(j \ominus 1)} \right).
				\end{split}
			\end{equation}
			Similarly,
			\begin{equation}
				\begin{split}
					[S(2 \Pi - I)]^3 \ket{\psi_j} & =  \frac{1}{\sqrt{2}} S(2\Pi - I) \left( \ket{(j \oplus 2)(j \oplus 1)} + \ket{(j \ominus 2)(j \ominus 1)} \right) \\
					& = \frac{1}{\sqrt{2}} \left( \ket{(j \oplus 3)(j \oplus 2)} + \ket{(j \ominus 3)(j \ominus 2)} \right).
				\end{split}
			\end{equation}
			Applying the principle of mathematical induction we can conclude the result.
		\end{proof} 
		
		Let the walker begins its journey from vertex $j$ on a cycle graph. Now, using Theorem \ref{theorem_propagation_on_cycle}, the probability of finding the walker at vertex $k$, at time $t$ is
		\begin{equation}
			\begin{split}
				P_t(k) & = |\braket{\psi_k | U^t | \psi_j}|^2 = |\braket{\psi_k | [S(2 \Pi - I)]^t | \psi_j}|^2\\
				& = \frac{1}{2} |\braket{\psi_k | (j \oplus t)(j \oplus (t - 1)} + \braket{\psi_k |(j \ominus t)(j \ominus (t - 1))}|^2.
			\end{split}
		\end{equation}
		Now, applying equation (\ref{psi_j_cycle}) we find the expression of $\ket{\psi_k}$. Then,
		\begin{equation}
			\begin{split}
				P_t(k) = & \frac{1}{4} |\braket{(k(k \oplus 1) + k(k \ominus 1)) | (j \oplus t)(j \oplus (t - 1)} \\
				& + \braket{(k(k \oplus 1) + k(k \ominus 1)) |(j \ominus t)(j \ominus (t - 1))}|^2 \\
				= & \frac{1}{4} |\braket{(k(k \oplus 1)| (j \oplus t)(j \oplus (t - 1)} + \braket{k(k \ominus 1)) | (j \oplus t)(j \oplus (t - 1)} \\
				& + \braket{(k(k \oplus 1) |(j \ominus t)(j \ominus (t - 1))}  + \braket{k(k \ominus 1)) |(j \ominus t)(j \ominus (t - 1))}|^2.
			\end{split}
		\end{equation}Numerical values of $P_t(k)$ are $0, 0.5$ and $1$ depending on $j, k$ and $t$. For example, on a cycle graph $C_6$, when $j = 0$ and $t = 2$ we have 
		\begin{equation}
			P_t(k) = \begin{cases} 0 & ~\text{if}~ k = 0, 1, 3, 5\\
				0.5 & ~\text{if}~ k = 2, 4.
			\end{cases} 
		\end{equation}
		We depict these probability values as a bar diagram in sub-figure \ref{cycle_probablity_6}c. 
		
		Theorem \ref{theorem_propagation_on_cycle} leads us to the idea of PST on a cycle graph with an even number of vertices. Consider the following corollaries. 		
		
		\begin{corollary}
			In a cycle graph $C_{2m}$ there is a PST between any two vertices $j$ and $j + m$ at time $t = m$ for $j = 0, 1, \dots (m - 1)$.
		\end{corollary}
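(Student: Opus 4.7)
The plan is to apply Theorem \ref{theorem_propagation_on_cycle} directly with $t = m$ in the cycle $C_{2m}$ and then recognize the resulting two-edge superposition as the basis vector $\ket{\psi_{j+m}}$ defined in equation (\ref{psi_j_cycle}). Since the definition of PST is that $U^t \ket{\psi_j} = \ket{\psi_k}$, this is the only thing we need to verify.

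First I would substitute $t = m$ into the theorem to obtain
\[
U^m \ket{\psi_j} = \tfrac{1}{\sqrt{2}}\bigl(\ket{(j \oplus m)(j \oplus (m-1))} + \ket{(j \ominus m)(j \ominus (m-1))}\bigr).
\]
The key arithmetical observation is that modulo $2m$ we have $m \equiv -m$, so $j \oplus m = j \ominus m$; call this common value $j' := j + m \pmod{2m}$. Next I would compute the two ``neighbor'' indices: $j \oplus (m-1) = j' \ominus 1$ and $j \ominus (m-1) = j \ominus m \oplus 1 = j' \oplus 1$. Substituting these identifications gives
\[
U^m \ket{\psi_j} = \tfrac{1}{\sqrt{2}}\bigl(\ket{j'\,(j' \ominus 1)} + \ket{j'\,(j' \oplus 1)}\bigr),
\]
which is exactly $\ket{\psi_{j'}} = \ket{\psi_{j+m}}$ by equation (\ref{psi_j_cycle}). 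This yields $P_m(j+m) = \braket{\psi_{j+m} | U^m | \psi_j} = 1$, completing the PST claim.

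There is no real obstacle here: the corollary is essentially a direct bookkeeping consequence of Theorem \ref{theorem_propagation_on_cycle}, and the only subtlety is being careful with the modular identifications so that the two summands produced by the theorem line up with the two summands in the definition of $\ket{\psi_{j+m}}$. The restriction $0 \le j < m$ is merely to avoid listing each unordered pair $\{j, j+m\}$ twice; the argument itself works for any $j \in \{0, 1, \dots, 2m-1\}$, since the modular arithmetic on the cycle is translation-invariant.
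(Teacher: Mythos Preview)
Your proof is correct and follows essentially the same route as the paper: apply Theorem~\ref{theorem_propagation_on_cycle} at $t=m$, use the modular identity $j\oplus m=j\ominus m$ (and hence $j\oplus(m-1)=j'\ominus 1$, $j\ominus(m-1)=j'\oplus 1$) to rewrite the two summands, and recognize the result as $\ket{\psi_{j\oplus m}}$ via equation~(\ref{psi_j_cycle}). Your remark that the restriction $0\le j<m$ merely avoids listing each pair twice is a nice clarification not made explicit in the paper.
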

		
		\begin{proof} 
			Applying Theorem \ref{theorem_propagation_on_cycle}, we have
			\begin{equation}
				[S(2\Pi - I)]^m \ket{\psi_j} =  \frac{1}{\sqrt{2}} \left[\Ket{\left(j \oplus m \right)\left(j \oplus \left(m - 1\right) \right)} + \Ket{\left(j \ominus m \right)\left(j \ominus \left(m - 1\right) \right)} \right].
			\end{equation}
			Since $0 \leq j \leq 2m - 1$, we have $j \oplus m \equiv j \ominus m$ and $j \oplus m \oplus 1 \equiv j \ominus (m - 1)$. Now, applying equation (\ref{psi_j_cycle}) we find that $[S(2\Pi - I)]^m \ket{\psi_j} = \ket{\psi_{j \oplus m}}$ Therefore, if the cycle graph has $n = 2m$ vertices then there is PST between $j$ and $j \oplus m$ at time $t = m$. 
		\end{proof} 
		
		Applying the superposition principle of quantum states, we conclude that, given any two distinct vertices $j$ and $k$ in $C_{2m}$, there are PSTs between $j$ and $j \oplus n$ as well as $k $ and $k \oplus n$ simultaneously at time $t = m$. 
		
		For example, we consider a cycle graph with $6$ vertices. At $t = 3$ there is simultaneous PST between every pair of vertices  $(0, 3), (1, 4)$, and $(2, 5)$. In Figure \ref{cycle_probablity_6}, we plot $P_t(j, k)$ with bar diagrams for $0 \leq t \leq 3$. The walker initiates walking at vertex $j = 1$ at $t = 0$ and reaches to $j = 4$ and $t = 3$ following the model of Markovian quantum walks. In contrast, the model of continuous-time quantum walks, described in equation (\ref{continuous_time_walkbased_PST}), keeps the walker at the initial vertex with full probability at different time instances. In the subfigures of Figure \ref{cycle_probablity_6}, the red and green colored bars represent the probability of getting the walker at different vertices under the continuous-time and Markovian quantum walks, respectively.
		\begin{figure} 
			\begin{subfigure}[t]{.48\textwidth}
				\centering
				\includegraphics[height = 4cm, width = 6cm]{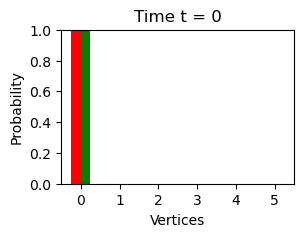}
				\caption{Probability of getting the walker at different vertices at $t = 0$.}
			\end{subfigure}
			\hspace{.5cm}
			\begin{subfigure}[t]{.48\textwidth}
				\centering
				\includegraphics[height = 4cm, width = 6cm]{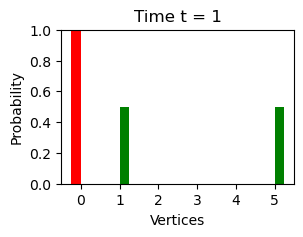}
				\caption{Probability of getting the walker at different vertices at $t = 1$.}
			\end{subfigure}\\
			\begin{subfigure}[t]{.48\textwidth}
				\centering
				\includegraphics[height = 4cm, width = 6cm]{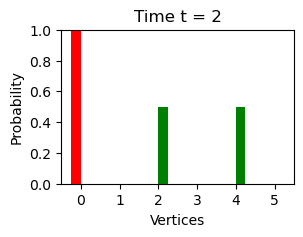}
				\caption{Probability of getting the walker at different vertices at $t = 2$.}
			\end{subfigure}
			\hspace{.5cm}
			\begin{subfigure}[t]{.48\textwidth}
				\centering
				\includegraphics[height = 4cm, width = 6cm]{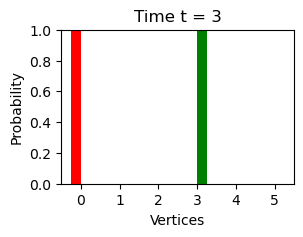}
				\caption{Probability of getting the walker at different vertices at $t = 0$.}
			\end{subfigure}
			\caption{The red and green bars in the sub-figures indicate the probability of getting the walker at vertex $k, 0 \leq k \leq 5$ in $C_6$, at different time instances $t = 0, 1, 2$ and $3$ in continuous-time and Markovian quantum walks, respectively. Suppose the walker started at vertex $1$ at $t = 0$. At $t = 3$, it reaches vertex $4$ with full probability, under the Markovian quantum walks. Therefore, there is a PST between vertices $1$ and $4$ at time $t = 3$. In contrast, the continuous-quantum walks keep the walker at vertex $1$.}
			\label{cycle_probablity_6} 
		\end{figure}
		
		\begin{corollary}
			All the vertices in $C_n$ are periodic at time $t = n$ for any $n$.
		\end{corollary}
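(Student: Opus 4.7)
The plan is to invoke Theorem \ref{theorem_propagation_on_cycle} directly with $t = n$ and then reduce the resulting indices using the periodicity of the modular operations $\oplus$ and $\ominus$ on $C_n$.

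First I would substitute $t = n$ into the theorem's formula to obtain
\begin{equation*}
[S(2\Pi - I)]^n \ket{\psi_j} = \frac{1}{\sqrt{2}}\left( \ket{(j \oplus n)(j \oplus (n-1))} + \ket{(j \ominus n)(j \ominus (n-1))} \right).
\end{equation*}
Next I would use the identities $j \oplus n \equiv j$, $j \ominus n \equiv j$, $j \oplus (n-1) \equiv j \ominus 1$, and $j \ominus (n-1) \equiv j \oplus 1$, all taken modulo $n$, which are immediate from the definition of $\oplus$ and $\ominus$ on the vertex set of $C_n$. Substituting these reduces the right-hand side to $\frac{1}{\sqrt{2}}(\ket{j(j \ominus 1)} + \ket{j(j \oplus 1)})$.

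Finally I would recognize this expression as precisely $\ket{\psi_j}$ by equation (\ref{psi_j_cycle}), concluding that $U^n \ket{\psi_j} = \ket{\psi_j}$ for every $j \in V(C_n)$, which is the definition of $j$ being a periodic vertex at time $t = n$. Since $j$ was arbitrary, all vertices are periodic at this time.

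There is no real obstacle here: the entire argument is a one-line substitution into the already-proved Theorem \ref{theorem_propagation_on_cycle}, followed by reindexing modulo $n$ and comparison with the definition of $\ket{\psi_j}$. The only thing to be careful about is keeping the two tensor factors in the correct order (the first factor returning to $j$, the second returning to the two neighbors $j \oplus 1$ and $j \ominus 1$), so that the match with equation (\ref{psi_j_cycle}) is exact rather than off by a swap.
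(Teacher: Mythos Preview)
Your proposal is correct and essentially identical to the paper's own proof: both substitute $t = n$ into Theorem~\ref{theorem_propagation_on_cycle}, reduce the indices via $j \oplus n = j \ominus n = j$, $j \oplus (n-1) = j \ominus 1$, $j \ominus (n-1) = j \oplus 1$, and identify the result with $\ket{\psi_j}$ from equation~(\ref{psi_j_cycle}). There is nothing to add.
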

		
		\begin{proof}
			Applying Theorem \ref{theorem_propagation_on_cycle}, we have
			\begin{equation}
				\begin{split}
					& [S(2\Pi - I)]^n \ket{\psi_j} = \frac{1}{\sqrt{2}} \left[\ket{\left(j \oplus n \right)\left(j \oplus \left(n - 1\right) \right)} + \ket{(j \ominus n) (j \ominus (n - 1))} \right] \\
					& = \frac{1}{\sqrt{2}} \left[\ket{j(j + 1)} + \ket{j(j - 1)} \right]  = \ket{\psi_j},
				\end{split}
			\end{equation}
			since $j \oplus n = j \ominus n = j$, $j \oplus (n - 1) = j - 1$, and $j \ominus (n - 1) = j + 1$. Therefore, the vertex $j$ is a periodic vertex for all $j$ at time $t = n$.
		\end{proof}
		
		For example, we consider a cycle graph with $5$ vertices. Here, all the vertices are periodic with period $t = 5$. In Figure \ref{cycle_probablity_5}, we plot $P_t(j, k)$ with bar diagrams for $0 \leq t \leq 5$ when a walker initiates walking at vertex $j = 1$.
		
		\begin{figure} 
			\begin{subfigure}[t]{.48\textwidth}
				\centering 
				\includegraphics[height = 4cm, width = 6cm]{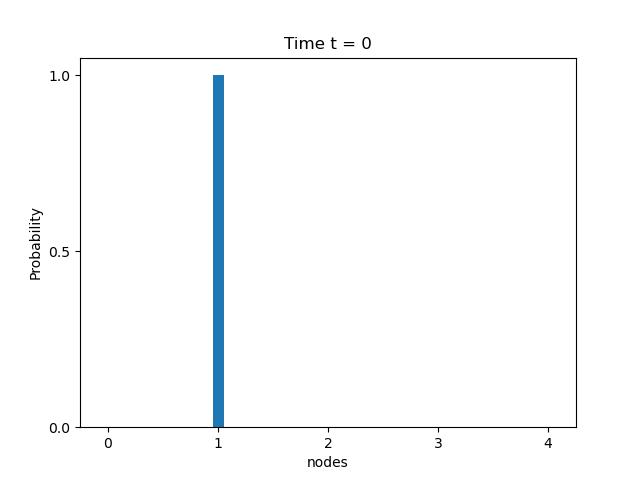}
				\caption{Probability of getting the walker at different vertices at $t = 0$.}
			\end{subfigure}
			\hspace{.5cm}
			\begin{subfigure}[t]{.48\textwidth}
				\centering 
				\includegraphics[height = 4cm, width = 6cm]{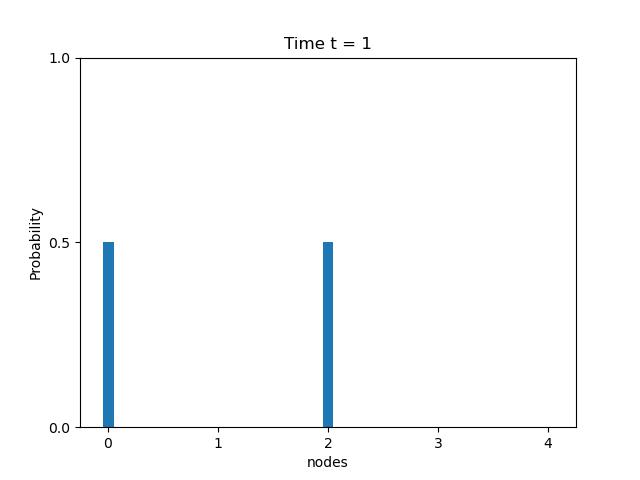}
				\caption{Probability of getting the walker at different vertices at $t = 1$.}
			\end{subfigure}\\
			\begin{subfigure}[t]{.48\textwidth}
				\centering 
				\includegraphics[height = 4cm, width = 6cm]{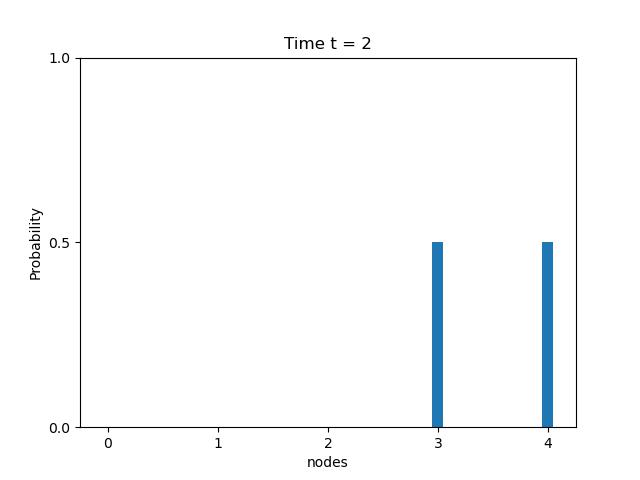}
				\caption{Probability of getting the walker at different vertices at $t = 2$.}
			\end{subfigure}
			\hspace{.5cm}
			\begin{subfigure}[t]{.48\textwidth}
				\centering 
				\includegraphics[height = 4cm, width = 6cm]{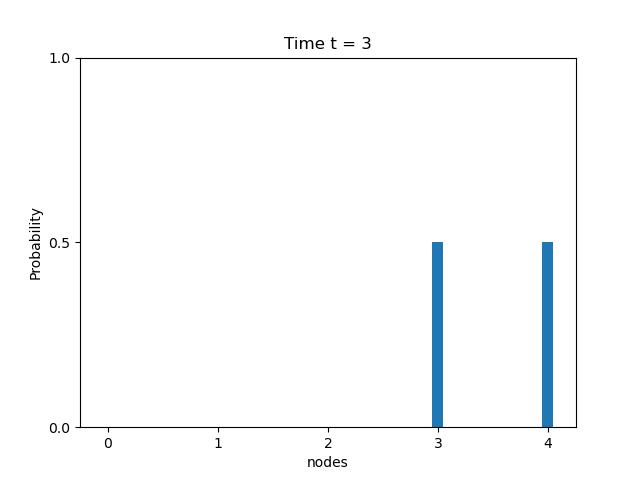}
				\caption{Probability of getting the walker at different vertices at $t = 3$.}
			\end{subfigure}\\
			\begin{subfigure}[t]{.48\textwidth}
				\centering 
				\includegraphics[height = 4cm, width = 6cm]{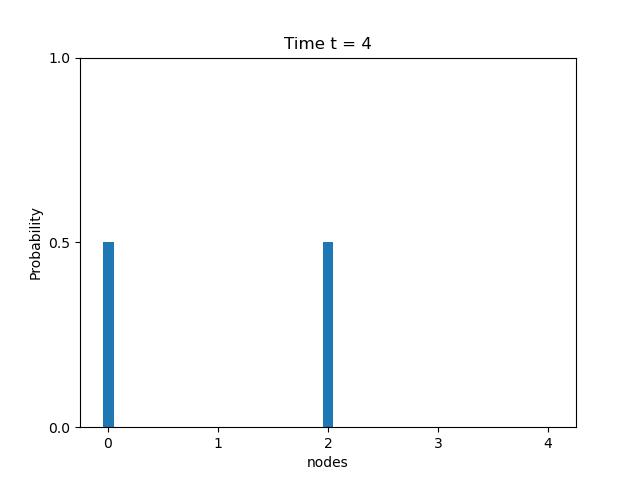}
				\caption{Probability of getting the walker at different vertices at $t = 4$.}
			\end{subfigure}
			\hspace{.5cm}
			\begin{subfigure}[t]{.48\textwidth}
				\centering 
				\includegraphics[height = 4cm, width = 6cm]{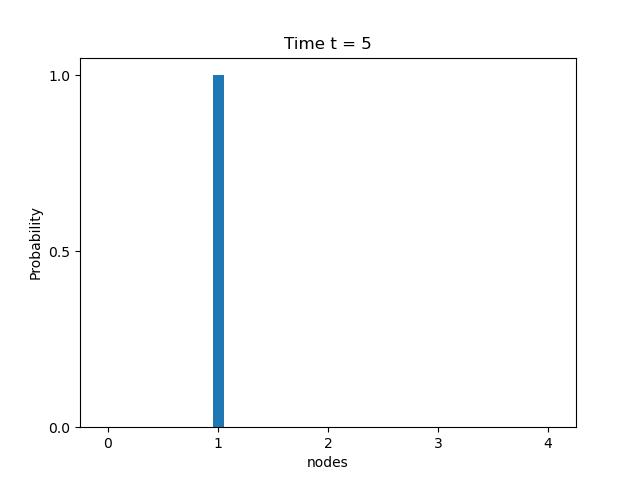}
				\caption{Probability of getting the walker at different vertices at $t = 5$.} 
			\end{subfigure}
			\caption{In this figure, we represent the probability of getting the walker at vertex $k, 0 \leq k \leq 4$ in a cycle graph $C_5$ with $5$ vertices at different time instances $t = 0, 1, \dots, 5$ with bar diagrams. Suppose the walker started at vertex $1$ at $t = 0$. At $t = 5$, it returns to the vertex $1$. Therefore, the probability of getting the walker at time $t = 0$ and $t = 5$ at vertex $1$ is $1$. Hence, the vertex $1$ is periodic with period $t = 5$.}
			\label{cycle_probablity_5}
		\end{figure}

	\section{State transfer on a path graph}
	\label{PST_Path}

		Recall that, a path graph $P_n$ with $n$ vertices $0, 1, \dots (n - 1)$ has edges $(j, (j + 1))$ for $j = 0, 1, \dots (n - 2)$. Now, the out-degree of $0$ and $(n - 1)$ is $1$. Also, the out-degree of each of the remaining vertices is $2$. Applying equation (\ref{probability_distn}) we have 
		\begin{equation}
			p_{j, k} = \begin{cases}
				1 & ~\text{if}~ (j, k) = (0, 1), ~\text{or}~ ((n - 1), (n - 2)); \\
				\frac{1}{2} & ~\text{if}~ (j, k) \in E(G), (j, k) \neq (0, 1), ~\text{or}~ (j, k) \neq ((n - 1), (n - 2));\\
				0 & ~\text{if}~ (j, k) \notin E(G).
			\end{cases}
		\end{equation}
		For every vertex $j$ we generate $p_{j, k}$, such that $\sum_{k \in V(G)} p_{j,k} = 1$. Consider Figure \ref{path5} for an example.
		\begin{figure} 
			\begin{subfigure}[t]{.49\textwidth}
				\centering 
				\begin{tikzpicture}[scale = 1]
					\draw [fill] (0, 0) circle [radius=0.05 cm];
					\node [below] at (0, 0) {0};
					\draw [fill] (1, 0) circle [radius=0.05 cm];
					\node [below] at (1, 0) {1};
					\draw [fill] (2, 0) circle [radius=0.05 cm];
					\node [below] at (2, 0) {2};
					\draw [fill] (3, 0) circle [radius=0.05 cm];
					\node [below] at (3, 0) {3};
					\draw [fill] (4, 0) circle [radius=0.05 cm];
					\node [below] at (4, 0) {4};
					\draw [fill] (5, 0) circle [radius=0.05 cm];
					\node [below] at (5, 0) {5};
					\draw (0, 0) -- (5, 0);
				\end{tikzpicture}
				\caption{}
			\end{subfigure}
			\begin{subfigure}[t]{.49\textwidth}
				\centering 
				\begin{tikzpicture}[scale = 1]
					\draw [fill] (0, 0) circle [radius=0.05 cm];
					\node [below] at (0, 0) {0};
					\draw [fill] (1, 0) circle [radius=0.05 cm];
					\node [below] at (1, 0) {1};
					\draw [fill] (2, 0) circle [radius=0.05 cm];
					\node [below] at (2, 0) {2};
					\draw [fill] (3, 0) circle [radius=0.05 cm];
					\node [below] at (3, 0) {3};
					\draw [fill] (4, 0) circle [radius=0.05 cm];
					\node [below] at (4, 0) {4};
					\draw [fill] (5, 0) circle [radius=0.05 cm];
					\node [below] at (5, 0) {5};
					\draw [out=330,in= 210, ->] (0, 0) to  (.9, 0);
					\draw [out = 150, in = 30, ->](1, 0) to (.1, 0);
					\draw [out=330,in= 210, ->] (1, 0) to  (1.9, 0);
					\draw [out = 150, in = 30, ->](2, 0) to (1.1, 0);
					\draw [out=330,in= 210, ->] (2, 0) to  (2.9, 0);
					\draw [out = 150, in = 30, ->](3, 0) to (2.1, 0);
					\draw [out=330,in= 210, ->] (3, 0) to  (3.9, 0);
					\draw [out = 150, in = 30, ->](4, 0) to (3.1, 0);
					\draw [out=330,in= 210, ->] (4, 0) to  (4.9, 0);
					\draw [out = 150, in = 30, ->](5, 0) to (4.1, 0);
				\end{tikzpicture} 
				\caption{}
				\label{path5}
			\end{subfigure}
			\caption{We have a path graph with $6$ vertices in figure (a). For every edge, we generate two opposite orientations and draw the new graph in figure (b). Except for vertices $0$ and $5$, all other vertices have two outgoing edges. Hence, their out-degree is $2$. The out-degree of $1$ and $5$ is $1$. Therefore $p_{0, 1} = p_{5, 4} = 1$ and for all other edges $p_{j, k} = \frac{1}{2}$.}
		\end{figure}
		
		Applying equation (\ref{psi_j}) we have
		\begin{equation}\label{psi_j_path}
			\ket{\psi_j} =
			\begin{cases}
				\ket{01} & ~\text{if}~ j = 0; \\
				\frac{1}{\sqrt{2}}(\ket{j(j+1)} + \ket{j(j - 1)}) & ~\text{if}~ j = 1, 2, \dots (n - 2); \\
				\ket{(n - 1)(n - 2)} & ~\text{if}~ j = (n - 1).
			\end{cases}
		\end{equation}
		The sum of projection operators on $\mathcal{L}(\Psi)$ for path graph $P_n$ is given by
		\begin{equation}
			\begin{split}
				\Pi = & \ket{01}\bra{01} + \frac{1}{2}\sum_{j = 1}^{(n - 2)} \left( \ket{j(j+1)} + \ket{j(j - 1)} \right) \left(\bra{j(j+1)} + \bra{j(j - 1)} \right) \\
				& + \ket{(n - 1)(n - 2)} \bra{(n - 1)(n - 2)}.
			\end{split}
		\end{equation}
		In the theorem below, we demonstrate PST between the two extreme vertices of any path graph $P_n$.
		
		\begin{theorem}
			There is a PST between the vertices $0$ and $(n - 1)$ in a path graph $P_n$ with $n$ vertices at time $t = (n - 1)$.
		\end{theorem}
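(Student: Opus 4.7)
The plan is to track the orbit of $\ket{\psi_0}$ under iteration of $U = S(2\Pi-I)$, showing that after exactly $n-1$ steps it coincides with $\ket{\psi_{n-1}}$. I would first note that the endpoints of the path behave differently from interior vertices: since vertices $0$ and $n-1$ have out-degree one, the basis states reduce to the single kets $\ket{\psi_0} = \ket{01}$ and $\ket{\psi_{n-1}} = \ket{(n-1)(n-2)}$, so the target of the transfer is simply $\ket{(n-1)(n-2)}$.

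Next I would dispose of the first step by direct computation rather than appealing to Lemma \ref{App_Lemma_1} (which assumes an interior vertex with two outgoing edges). Because $\ket{01} = \ket{\psi_0} \in \Psi$ and the elements of $\Psi$ are orthonormal, $\Pi \ket{01} = \ket{01}$, hence $(2\Pi - I)\ket{01} = \ket{01}$ and $U\ket{\psi_0} = S\ket{01} = \ket{10}$.

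The main inductive claim is then $U^{t}\ket{\psi_0} = \ket{t\,(t-1)}$ for each $t = 1, 2, \dots, n-1$. With the base case in hand, the inductive step is an immediate application of Lemma \ref{App_Lemma_2}: the state $\ket{t\,(t-1)}$ has the form $\ket{(j+1)j}$ with $j = t-1$, and Lemma \ref{App_Lemma_2} gives $U\ket{t\,(t-1)} = \ket{(t+1)\,t}$. One must verify that the hypothesis $j \neq n-1$ of Lemma \ref{App_Lemma_2} is respected throughout; since the second coordinate $t-1$ ranges only over $0, 1, \dots, n-3$ during the induction (the final step being $U\ket{(n-2)(n-3)} = \ket{(n-1)(n-2)}$), this hypothesis never fails.

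Setting $t = n-1$ then yields $U^{n-1}\ket{\psi_0} = \ket{(n-1)(n-2)} = \ket{\psi_{n-1}}$, which is exactly the PST condition. I expect no real obstacle beyond the edge-case bookkeeping: carefully distinguishing the boundary behavior of $\ket{\psi_0}$ (handled by hand, not by Lemma \ref{App_Lemma_1}) from the interior propagation (handled uniformly by Lemma \ref{App_Lemma_2}), and confirming that the iteration terminates exactly at the opposite endpoint without ever invoking Lemma \ref{App_Lemma_2} at the forbidden index $j = n-1$.
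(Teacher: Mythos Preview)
Your argument is correct and follows the same route as the paper: compute $U\ket{\psi_0}=\ket{10}$ directly at the boundary, then iterate Lemma~\ref{App_Lemma_2} to obtain $U^{t}\ket{\psi_0}=\ket{t(t-1)}$ and read off $\ket{\psi_{n-1}}$ at $t=n-1$; your bookkeeping on the admissible range of $j$ in Lemma~\ref{App_Lemma_2} is in fact more careful than the paper's. The only difference is that the paper also verifies the reverse transfer $U^{n-1}\ket{\psi_{n-1}}=\ket{\psi_0}$ via Lemma~\ref{App_Lemma_3}, but since the paper's definition of PST requires only $U^{t}\ket{\psi_j}=\ket{\psi_k}$, your one-directional argument already suffices.
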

		\begin{proof}
			Note that, $[S(2 \Pi - I)] \ket{\psi_0} = \ket{10}$. Applying Lemma \ref{App_Lemma_2} repeatedly on $\ket{10}$ we can prove that $[S(2 \Pi - I)]^t \ket{\psi_0} = \ket{t(t - 1)}$. Putting $t = n - 1$ we have $[S(2 \Pi - I)]^{(n - 1)} \ket{\psi_0} = \ket{(n - 1)(n - 2)} = \ket{\psi_{n - 1}}$. Also, $[S(2 \Pi - I)] \ket{\psi_{n - 1}} = \ket{(n - 2)(n - 1)}$. Applying Lemma \ref{App_Lemma_3} on $\ket{(n - 2)(n - 1)}$ repeatedly, we have $[S(2 \Pi - I)]^t \ket{\psi_{n- 1}} = \ket{(n - t - 1)(n - t)}$. When $t = (n - 1)$ observe that $[S(2 \Pi - I)]^{(n - 1)} \ket{\psi_{n- 1}} = \ket{01} = \ket{\psi_0}$. Therefore, we can conclude that there is a PST between $0$ and $(n -1)$ in $P_n$ at time $t = (n - 1)$.
		\end{proof}
		
		Now, we prove that there is PST between other pairs of vertices in $P_n$.
		
		\begin{theorem}\label{theorem_propagation_on_path}
			Let $1 \leq j \leq n - j - 1$ represents a vertex of a path graph $P_n$, then $[S(2 \Pi - I)]^t \ket{\psi_j} = $
			$$ 
			\begin{cases} 
				& \frac{1}{\sqrt{2}} \left( \ket{(j + t)(j + t - 1)} + \ket{(j - t)(j - (t - 1))} \right) ~\text{when}~ 0 \leq t \leq j; \\
				& \frac{1}{\sqrt{2}} \left( \ket{(j + t)(j+ t - 1)} + \ket{(t - j)(t - j - 1)} \right)   ~\text{when}~ j + 1 \leq t \leq n - j - 1; \\ 
				& \frac{1}{\sqrt{2}} \left( \ket{(2n - t - j - 2)(2n - j - t - 1)} + \ket{(t - j)(t - j - 1)} \right) ~\text{when}~ n - j \leq t \leq (n - 1). 
			\end{cases} $$
		\end{theorem}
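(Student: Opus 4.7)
The plan is to prove the formula by induction on $t$, interpreting the two kets that make up $\ket{\psi_j}$ as counter-propagating wavepackets: the one beginning as $\ket{j(j+1)}$ travels toward the boundary vertex $n-1$ via successive applications of Lemma~\ref{App_Lemma_2}, while the one beginning as $\ket{j(j-1)}$ travels toward the boundary vertex $0$ via Lemma~\ref{App_Lemma_3}. Because the hypothesis $1\le j\le n-j-1$ places $j$ in the lower half, the down-moving packet reaches its boundary first (at $t=j$) and the up-moving packet reaches its boundary at $t=n-j-1$; these two events are exactly the transitions from Case~1 to Case~2 and from Case~2 to Case~3, respectively.

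The base case $t=0$ is the definition of $\ket{\psi_j}$, and $t=1$ follows directly from Lemma~\ref{App_Lemma_1}. For the inductive step inside Case~1, with $1\le t<j$, the state $\tfrac{1}{\sqrt{2}}\bigl(\ket{(j+t)(j+t-1)}+\ket{(j-t)(j-t+1)}\bigr)$ has an up-term of the form $\ket{(k+1)k}$ (with $k=j+t-1$) to which Lemma~\ref{App_Lemma_2} applies, and a down-term of the form $\ket{k(k+1)}$ (with $k=j-t$) to which Lemma~\ref{App_Lemma_3} applies; the interior-vertex hypotheses $k\neq n-1$ and $k+1\neq 0$ are met throughout this range because $1\le j-t$ and $j+t\le n-2$. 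Matching the outputs against the Case~1 formula at time $t+1$ closes the induction here.

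The crux of the argument is the two boundary reflections. At $t=j$ the down-packet is $\ket{0\,1}$, which coincides with the distinguished basis vector $\ket{\psi_0}$ on the path graph, so $\Pi\ket{01}=\ket{\psi_0}$ and a direct computation yields $S(2\Pi-I)\ket{01}=S\ket{01}=\ket{10}$; meanwhile the up-packet still lies in the interior and advances by Lemma~\ref{App_Lemma_2}. This matches Case~2 at $t=j+1$ since $\ket{(t-j)(t-j-1)}=\ket{10}$ there. Within Case~2, both terms have the form $\ket{(k+1)k}$, so the inductive step is a double application of Lemma~\ref{App_Lemma_2}. The analogous Case~$2\to 3$ transition at $t=n-j$ uses $\ket{(n-1)(n-2)}=\ket{\psi_{n-1}}$ and the companion boundary computation $S(2\Pi-I)\ket{(n-1)(n-2)}=\ket{(n-2)(n-1)}$, producing the reflected (now downward-moving) up-packet in Case~3; the inductive step in Case~3 then applies Lemma~\ref{App_Lemma_3} to the reflected ket $\ket{(2n-t-j-2)(2n-j-t-1)}$ and Lemma~\ref{App_Lemma_2} to the other.

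I expect the main obstacle to be purely one of bookkeeping, namely verifying that the six index formulas across the three cases interlock correctly at the transition times $t=j$ and $t=n-j-1$, and that the hypotheses of Lemmas~\ref{App_Lemma_2} and~\ref{App_Lemma_3} are satisfied exactly on the interiors of the three regimes so that the boundary lemmas cover the transitions. A genuine corner case is $j=\tfrac{n-1}{2}$ with $n$ odd, in which Case~2 is empty and both wavepackets strike their boundaries simultaneously at $t=j$; here the argument goes directly from Case~1 to Case~3 and both reflection computations must be performed at the same step, which I would dispose of by a short separate verification after the generic induction.
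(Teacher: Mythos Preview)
Your proposal is correct and follows essentially the same route as the paper: induction on $t$ through the three regimes, using Lemma~\ref{App_Lemma_1} for $t=1$, Lemmas~\ref{App_Lemma_2} and~\ref{App_Lemma_3} for the interior steps, and the direct computations $S(2\Pi-I)\ket{01}=\ket{10}$ and $S(2\Pi-I)\ket{(n-1)(n-2)}=\ket{(n-2)(n-1)}$ at the two boundary reflections. Your treatment is in fact slightly more careful than the paper's, which glosses the reflection steps under a blanket ``applying Lemma~\ref{App_Lemma_2}'' and does not mention the degenerate case $j=\tfrac{n-1}{2}$ where Case~2 collapses.
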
 
		\begin{proof}
			This result holds trivially when $t = 0$. Lemma \ref{App_Lemma_1} indicates the proof for $t = 1$. For other values of $t$ with $2 \leq t \leq j$, the proof is similar to Theorem \ref{theorem_propagation_on_cycle}. When $t = j$ we have $[S(2 \Pi - I)]^j \ket{\psi_j} = \frac{1}{\sqrt{2}} \left( \ket{(2j)(2j - 1)} + \ket{01} \right)$. 
			
			Now, consider the range $j + 1 \leq t \leq n - j - 1$, that is $1 \leq t_1 = t - j \leq n - 2j -1$.  Applying Lemma \ref{App_Lemma_2}, we have
			\begin{equation}
				\begin{split}
					[S(2 \Pi - I)]^{j + 1} \ket{\psi_j} & = \frac{1}{\sqrt{2}} \left( S(2 \Pi - I) \ket{(2j)(2j - 1)} + S(2 \Pi - I) \ket{01} \right)\\
					&  = \frac{1}{\sqrt{2}} \left( \ket{(2j + 1)(2j)} + \ket{10} \right).
				\end{split}
			\end{equation}
			Applying Lemma \ref{App_Lemma_2} repeatedly we have
			\begin{equation}
				\begin{split}
					& [S(2 \Pi - I)]^{j + t_1} \ket{\psi_j} = \frac{1}{\sqrt{2}} \left( \ket{(2j + t_1)(2j+ t_1 - 1)} + \ket{t_1(t_1 - 1)} \right) \\
					\text{or}~ & [S(2 \Pi - I)]^t \ket{\psi_j} = \frac{1}{\sqrt{2}} \left( \ket{(j + t)(j+ t - 1)} + \ket{(t - j)(t - j - 1)} \right).
				\end{split}
			\end{equation}
			When $t = (n - j - 1)$ we get 
			\begin{equation}
				\begin{split}
					[S(2 \Pi - I)]^{(n - j - 1)} \ket{\psi_j} = \frac{1}{\sqrt{2}} \left( \ket{(n - 1)(n - 2)} + \ket{(n - 2j - 1)(n - 2j - 2)} \right).
				\end{split}
			\end{equation}
			
			Now, consider the range $n - j \leq t \leq (n - 1)$, that is $0 \leq t_2 = t + j - n \leq j + 1$. When $t = n - j$, equivalently, $t_2 = 0$ we have
			\begin{equation}
				\begin{split}
					[S(2 \Pi - I)]^{n - j} \ket{\psi_j} = & \frac{1}{\sqrt{2}} S(2 \Pi - I)\ket{(n - 1)(n - 2)} + \frac{1}{\sqrt{2}} S(2 \Pi - I)\ket{(n - 2j - 1)(n - 2j - 2)} \\
					= & \frac{1}{\sqrt{2}} \left( \ket{(n - 2)(n - 1)} + \ket{(n - 2j)(n - 2j - 1)} \right). 
				\end{split}
			\end{equation} 
			Applying Lemma \ref{App_Lemma_2} and \ref{App_Lemma_3} repeatedly we have
			\begin{equation}
				\begin{split}
					[S(2 \Pi - I)]^{n - j + 1} \ket{\psi_j} = \frac{1}{\sqrt{2}} \left( \ket{(n - 3)(n - 2)} + \ket{(n - 2j + 1)(n - 2j)} \right). 
				\end{split}
			\end{equation} 
			In general we can write 
			\begin{equation}
				\begin{split}
					& [S(2 \Pi - I)]^{n - j + t_2} \ket{\psi_j} \\
					& = \frac{1}{\sqrt{2}} \ket{(n - (t_2 + 2))(n - (t_2 + 1))} + \frac{1}{\sqrt{2}} \ket{(n - 2j + t_2)(n - 2j + t_2 - 1)} \\
					\text{or}~ & [S(2 \Pi - I)]^t \ket{\psi_j} = \frac{1}{\sqrt{2}} \ket{(2n - t - j - 2)(2n - j - t - 1)} + \frac{1}{\sqrt{2}} \ket{(t - j)(t - j - 1)}.
				\end{split}
			\end{equation} 
			Combining all we observe the result.
		\end{proof}
		
		Now, we can calculate the probability $P_t(k)$ of getting the walker at a vertex $k$ on a path graph. Recall that, in general the walker begins its journey from any vertex $j$. Theorem \ref{theorem_propagation_on_path} provides the expression of $U^t \ket{\psi_j}$ and equation (\ref{psi_j_path}) suggests the expression of $\ket{\psi_k}$. Then $P_t(k)$ is generated by the inner product of $\ket{\psi_k}$ and $U^t \ket{\psi_j}$. When the walker starts its journey from $j = 0$, probability of getting the walker at different vertices of a path graph $P_6$ at different time steps are indicated by the green bars in Figure \ref{line_6}.
		
		We mention the following corollaries to discuss PST and periodic vertices in a path graph.
		
		\begin{corollary}
			There is PST between the vertices $j$ and $n - j - 1$ in line graph $P_n$ for $1 \leq j < \frac{n - 1}{2}$ at time $t = n - 1$.
		\end{corollary}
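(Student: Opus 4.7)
The plan is to specialize Theorem~\ref{theorem_propagation_on_path} to $t = n - 1$ and to observe that the resulting state equals $\ket{\psi_{n - j - 1}}$. Since $j \ge 1$ by hypothesis, we have $n - j \le n - 1$, so $t = n - 1$ falls into the third branch of the theorem (the range $n - j \le t \le n - 1$), and that branch's formula is the one to apply.

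Substituting $t = n - 1$ into the third-case expression, straightforward arithmetic simplifies the four indices appearing in the two kets, and the right-hand side becomes
\begin{equation*}
\frac{1}{\sqrt{2}} \bigl( \ket{(n - j - 1)(n - j)} + \ket{(n - j - 1)(n - j - 2)} \bigr).
\end{equation*}
The hypothesis $1 \le j < (n - 1)/2$ guarantees that $1 \le n - j - 1 \le n - 2$, so $n - j - 1$ is an interior vertex of $P_n$, and the definition of $\ket{\psi_k}$ for interior $k$ given in section~4 shows that the displayed expression is exactly $\ket{\psi_{n - j - 1}}$. Hence $U^{n - 1} \ket{\psi_j} = \ket{\psi_{n - j - 1}}$, which is precisely PST between $j$ and $n - j - 1$ at time $n - 1$.

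The main obstacle is essentially bookkeeping: confirming which of the three ranges of Theorem~\ref{theorem_propagation_on_path} the desired time $t = n - 1$ sits in, and checking that $n - j - 1$ is interior so that its $\ket{\psi}$ takes the two-term form rather than a boundary single-term form. Both conditions follow immediately from $1 \le j < (n - 1)/2$, so no work beyond the already-proven theorem is required, and the index $n-j$ and $n-j-2$ also lie in $\{0,1,\dots,n-1\}$ under the same bound.
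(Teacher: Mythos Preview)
Your proof is correct and follows essentially the same route as the paper: both apply Theorem~\ref{theorem_propagation_on_path} at $t = n-1$ in its third case, simplify the indices to obtain $\tfrac{1}{\sqrt{2}}\bigl(\ket{(n-j-1)(n-j)} + \ket{(n-j-1)(n-j-2)}\bigr)$, and identify this as $\ket{\psi_{n-j-1}}$. Your version is in fact slightly more careful than the paper's, since you explicitly justify which branch of the theorem applies and why $n-j-1$ is an interior vertex.
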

		
		\begin{proof}
			Let the initial state be $\ket{\psi_j}$. Applying Theorem \ref{theorem_propagation_on_path} we have the state at time $t = (n - 1)$, which is 
			\begin{equation}
				[S(2 \Pi - I)]^{n - 1} \ket{\psi_j} = \frac{1}{\sqrt{2}} \left( \ket{(n - j - 1)(n - j)} + \ket{(n - j - 1)(n - j - 2)} \right) = \ket{\psi_{n - j - 1}}.
			\end{equation}
			Therefore there is PST between the vertices $j$ and $(n - j - 1)$.
		\end{proof}
		
		\begin{corollary}
			In a path graph $P_{2m + 1}$ the vertex $m$ is periodic at time $t = (n - 1)$.
		\end{corollary}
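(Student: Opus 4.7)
The plan is to read off the statement as an immediate consequence of Theorem \ref{theorem_propagation_on_path} applied at the middle vertex $j = m$ of $P_{n}$ with $n = 2m+1$. We want to establish $[S(2\Pi - I)]^{n-1}\ket{\psi_m} = \ket{\psi_m}$, so the task reduces to selecting the correct branch of the piecewise formula and simplifying.

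First I would check that $j = m$ is in the allowed range of Theorem \ref{theorem_propagation_on_path}, namely $1 \leq j \leq n - j - 1$. With $n = 2m+1$ this becomes $1 \leq m \leq m$, which is fine provided $m \geq 1$ (the case $m = 0$ is vacuous since then $P_1$ has a single vertex). Next I would locate $t = n - 1 = 2m$ in the partition of the time axis. Since $n - j = m+1$ and $n - 1 = 2m$, we have $n - j \leq t \leq n - 1$, so the third branch applies.

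Substituting $j = m$, $t = 2m$, and $n = 2m+1$ into the third branch gives $2n - t - j - 2 = m$, $2n - j - t - 1 = m+1$, $t - j = m$, and $t - j - 1 = m - 1$. Therefore
\begin{equation}
[S(2\Pi - I)]^{n-1}\ket{\psi_m} = \tfrac{1}{\sqrt{2}}\bigl(\ket{m(m+1)} + \ket{m(m-1)}\bigr) = \ket{\psi_m},
\end{equation}
which is exactly the definition of $m$ being a periodic vertex at time $t = n - 1$.

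There is no real obstacle here: the theorem does all the work and the only thing to be careful about is index bookkeeping, in particular verifying that the third branch (rather than the second) is the one activated when $t = n-1$ and $j$ is the central vertex. A brief sanity check that $m+1 \leq 2m$ for $m \geq 1$ settles this unambiguously.
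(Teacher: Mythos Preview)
Your proof is correct and follows essentially the same approach as the paper: both apply Theorem \ref{theorem_propagation_on_path} with $n = 2m+1$, $j = m$, $t = 2m$ and simplify to $\ket{\psi_m}$. You are simply more explicit than the paper in verifying the hypothesis $1 \leq j \leq n-j-1$ and in identifying that the third branch of the piecewise formula is the relevant one (the paper states the resulting identity without naming the branch).
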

		
		\begin{proof}
			Applying $n = 2m + 1$, $j = m$, and $t = n - 1 = 2m$ in Theorem \ref{theorem_propagation_on_path} we have 
			\begin{equation}
				[S(2 \Pi - I)]^{2m} \ket{\psi_m} = \frac{1}{\sqrt{2}} \left( \ket{m(m + 1)} + \ket{m(m - 1)} \right) = \ket{\psi_m}.
			\end{equation}
			Therefore, the vertex $m = \frac{n - 1}{2}$ is periodic at time $t = n - 1$, when $n$ is an odd number.
		\end{proof}
		
		As there is PST s between $j$ and $n - j - 1$ all the vertices in $P_n$ are periodic at time $t = 2(n - 1)$.
		
		We can visualize the movement of the quantum walker in terms of probability. Suppose the walker starts at vertex $j$ at time $t = 0$ with the initial state $\ket{\psi_j}$. Probability of getting it at vertex $k$ at time $t$ is $P_t(j, k) = |\braket{\psi_k | [S(2\Pi - I)]^t | \psi_j}|^2$. Theorem \ref{theorem_propagation_on_path} indicates when $0 \leq t \leq j$ we have 
		\begin{equation}
			P_t(j, k) =  \begin{cases} 1 & ~\text{when}~ k = j ~\text{and}~ t = 0; \\ \frac{1}{4} & ~\text{when}~ k = j + t ~\text{or}~ k = j - t ~\text{and}~ t = 0 \leq j < j - 1; \\ \frac{1}{2} & ~\text{when}~ k = 0 ~\text{and}~ t = j. \end{cases} 
		\end{equation} 
		When $j + 1 \leq t \le n - j - 1$ we have 
		\begin{equation}
			P_t(j, k) =  \begin{cases} \frac{1}{4} & ~\text{when}~ k = j + t ~\text{or}~ k = t - j  ~\text{and}~ j + 1 \leq t < n - j - 1; \\ \frac{1}{4} & ~\text{when}~ k = n - 2j - 1 ~\text{and}~ t = n - j - 1; \\ \frac{1}{2} & ~\text{when}~ k = n - 1 ~\text{and}~ t = n - j - 1. \end{cases} 
		\end{equation}
		Also for $n - j \leq t \leq n - 1$ we have
		\begin{equation}
			P_t(j, k) =  \begin{cases} \frac{1}{4} & ~\text{when}~ k = 2n - t - j - 2 ~\text{or}~ k = t - j ~\text{and}~ n - j \leq t < n - 1; \\ 1 & ~\text{when}~ k = n - j - 1 ~\text{and}~ t = n - 1. \end{cases}
		\end{equation}
		Other than these values of $t, j$ and $k$ we have $P_t(j, k) = 0$. For example, we consider a path graph with $6$ vertices with a walker who initiates walking at vertex $j = 1$. There is a PST between vertex $1$ and $4$. We plot $P_t(j, k)$ with bar diagrams for different values of $t$ in the Figures \ref{line_6}. 
		\begin{figure}
			\begin{subfigure}[t]{.48\textwidth}
				\centering
				\includegraphics[height = 4cm, width = 6cm]{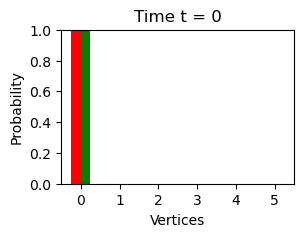}
				\caption{Probability of getting the walker at different vertices at $t = 0$.}
			\end{subfigure}
			\hspace{.5cm}
			\begin{subfigure}[t]{.48\textwidth}
				\centering
				\includegraphics[height = 4cm, width = 6cm]{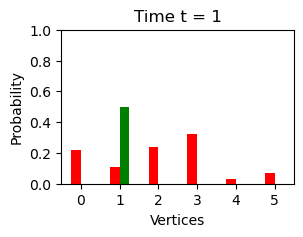}
				\caption{Probability of getting the walker at different vertices at $t = 1$.}
			\end{subfigure}\\
			\begin{subfigure}[t]{.48\textwidth}
				\centering
				\includegraphics[height = 4cm, width = 6cm]{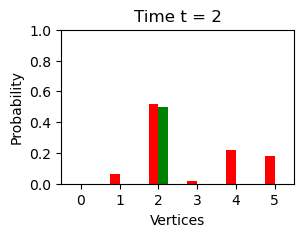}
				\caption{Probability of getting the walker at different vertices at $t = 2$.}
			\end{subfigure}
			\hspace{.5cm}
			\begin{subfigure}[t]{.48\textwidth}
				\centering
				\includegraphics[height = 4cm, width = 6cm]{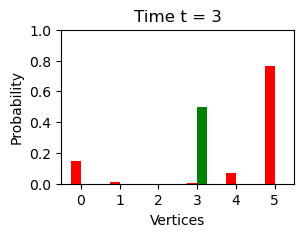}
				\caption{Probability of getting the walker at different vertices at $t = 3$.}
			\end{subfigure}\\
			\begin{subfigure}[t]{.48\textwidth}
				\centering
				\includegraphics[height = 4cm, width = 6cm]{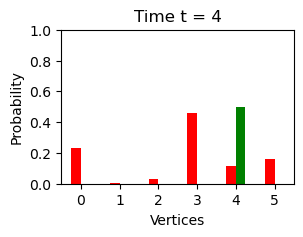}
				\caption{Probability of getting the walker at different vertices at $t = 4$.}
			\end{subfigure}
			\hspace{.5cm}
			\begin{subfigure}[t]{.48\textwidth}
				\centering
				\includegraphics[height = 4cm, width = 6cm]{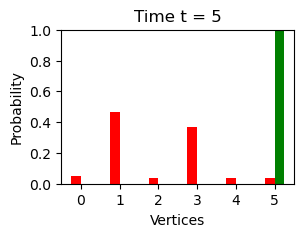}
				\caption{Probability of getting the walker at different vertices at $t = 5$.} 
			\end{subfigure}
			\caption{The red and green bars in the sub-figures indicate the probability of getting the walker at the vertex $k, 0 \leq k \leq 5$ in the path graph with $6$ vertices at different time instances $t = 0, 1, \dots, 5$ in continuous-time and Markovian quantum walks, respectively. Suppose the walker started at vertex $0$ at $t = 0$. At $t = 5$, the walker reaches the vertex $5$ with full probability under Markovian quantum walks. Therefore, there is a PST between vertices $1$ and $4$ at time $t = 5$. But the continuous-time quantum walk does not allow the walker to move to any vertex with full probability.}
			\label{line_6}
		\end{figure}
		
		On the other hand, we can determine the probability of getting the walker at different time steps using the continuous-time quantum walks governed by equation (\ref{continuous_time_walkbased_PST}). Let the walker initiate walking at vertex $0$ at time $t = 0$. Therefore, the probability of getting the walker at $t = 0$ at vertex $0$ is $1$. Probabilities of getting the walker at different vertices in time $t = 1, 2, \dots, 5$ are presented by the red bars in Figure \ref{line_6}. It can be observed that the walker does not move to any vertex with full probability in these time values, following the continuous-time quantum walk.

	\section{State transfer in glued path graph}

		Recall that in Section \ref{PST_Path}, we discussed PST between two extreme vertices $0$ and $(n - 1)$ of a path graph $P_n$ with $n$ vertices. Also, in Section \ref{PST_cycle}, we discuss PST between vertices $0$ and $m$ in a cycle graph $C_{2m}$ with $2m$ vertices. To construct a cycle graph with $2m$ vertices from two path graphs with $m$ vertices, we glue the extreme vertices of the path graphs and update the vertex labeling. We may generalize this process to generate graphs supporting PST between two extreme vertices.
		
		Here, we discuss the gluing process to construct the graphs supporting PST under Markovian quantum walks. Consider $k$ copies of path graph $P_m$ with vertices $0, 1, \dots (m - 1)$ in each. We generate a glued graph $P^{(k)}$ by combining all the $0$ vertex into one vertex. In addition, we combining all the $(m - 1)$ vertices into one vertex. We then relabel the vertices with $0, 1, \dots [k(m - 2) + 1]$. A few glued graphs generated from $P_4$ are depicted in Figure \ref{glued_paths}.
		\begin{figure}
			\begin{subfigure}{.48\textwidth}
				\centering
				\includegraphics[scale = .5]{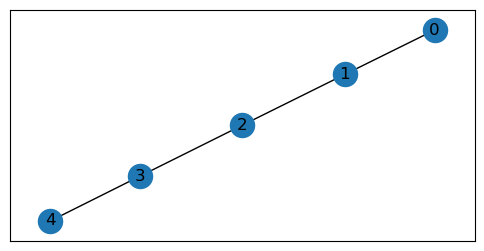}
				\caption{Path graph $P_4$.}
			\end{subfigure}
			\hspace{.5cm}
			\begin{subfigure}{.48\textwidth}
				\centering
				\includegraphics[scale = .5]{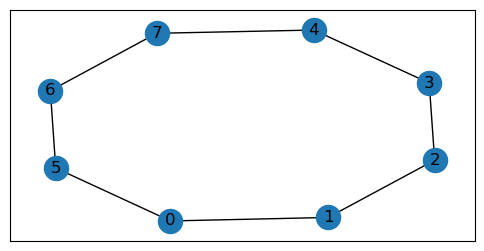}
				\caption{Glued graph $P_4^{(2)}$ from $2$ copies of $P_4$ graphs.}
			\end{subfigure}\\
			\begin{subfigure}{.48\textwidth}
				\centering
				\includegraphics[scale = .5]{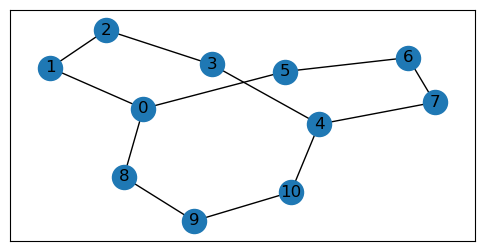}
				\caption{Glued graph $P_4^{(3)}$ from $3$ copies of $P_4$ graphs.}
			\end{subfigure}
			\hspace{.5cm}
			\begin{subfigure}{.48\textwidth}
				\centering
				\includegraphics[scale = .5]{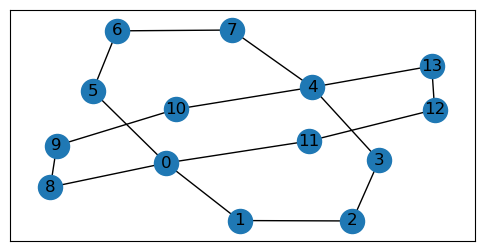}
				\caption{Glued graph $P_4^{(4)}$ from $4$ copies of $P_4$ graphs.}
			\end{subfigure}\\
			\begin{subfigure}{.48\textwidth}
				\centering
				\includegraphics[scale = .5]{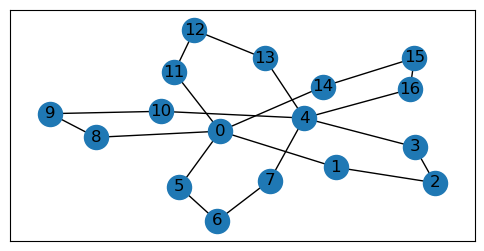}
				\caption{Glued graph $P_4^{(5)}$ from $5$ copies of $P_4$ graphs.}
			\end{subfigure}
			\hspace{.5cm}
			\begin{subfigure}{.48\textwidth}
				\centering
				\includegraphics[scale = .5]{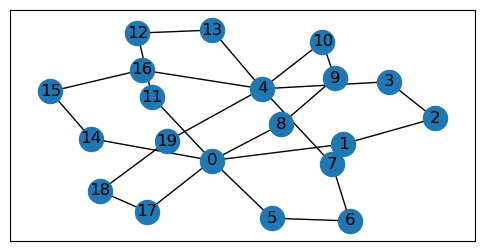}
				\caption{Glued graph $P_4^{(6)}$ from $6$ copies of $P_4$ graphs.}
			\end{subfigure}
			\caption{We consider $k$ copies of path graphs $P_4$ for $k = 2, 3, \dots, 6$. We glue the vertices $0$ and $4$ of all the $k$ copies into two vertices $0$ and $4$ in the resultant graph. Relabeling, we obtain different glued graphs $P_4^{(k)}$ for different values of $k$. Note that, between $0$ and $4$ in a glued graph $P_4^{(k)}$, there are $k$ distinct paths of length $3$.}
			\label{glued_paths}
		\end{figure}
		
		We numerically observe that there is PST between $0$ and $(m - 1)$ in all the graphs. Considering $7$ copies of $P_4$ graphs we construct a glued graph. After staring from the vertex $0$, the walker reaches to vertex $4$ with full probability at time $t = 5$. The probability distribution of getting the walker at different vertices at different time steps under Markovian quantum walks is depicted in Figure \ref{PST_in_glued_graph}.
		\begin{figure}
			\begin{subfigure}{.48\textwidth}
				\includegraphics[scale = .5]{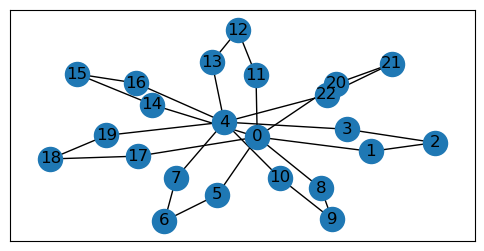}
				\caption{Glued graph $P_4^{(7)}$ from $7$ copies of $P_4$ graph.} 
				\centering 
			\end{subfigure}
			\hspace{.5cm}
			\begin{subfigure}{.48\textwidth}
				\includegraphics[scale = .5]{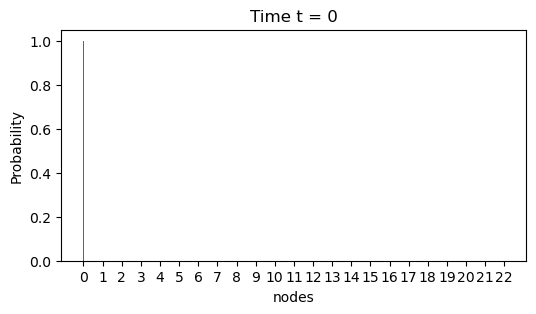}
				\caption{Probability of getting the walker at different vertices at time $t = 0$.} 
				\centering 
			\end{subfigure}\\
			\begin{subfigure}{.48\textwidth}
				\includegraphics[scale = .5]{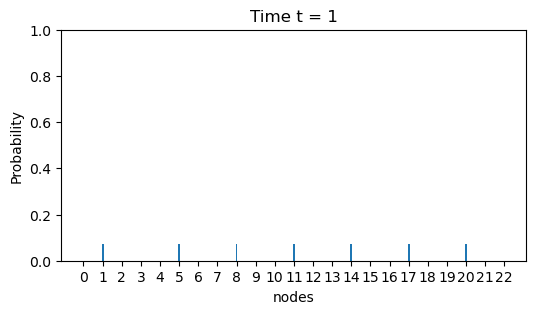}
				\caption{Probability of getting the walker at different vertices at time $t = 1$.} 
				\centering 
			\end{subfigure}
			\hspace{.5cm}
			\begin{subfigure}{.48\textwidth}
				\includegraphics[scale = .5]{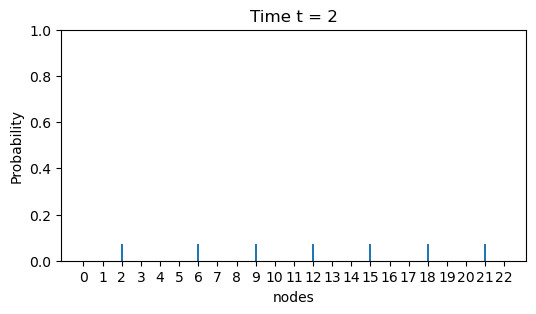}
				\caption{Probability of getting the walker at different vertices at time $t = 2$.} 
				\centering 
			\end{subfigure}\\
			\begin{subfigure}{.48\textwidth}
				\includegraphics[scale = .5]{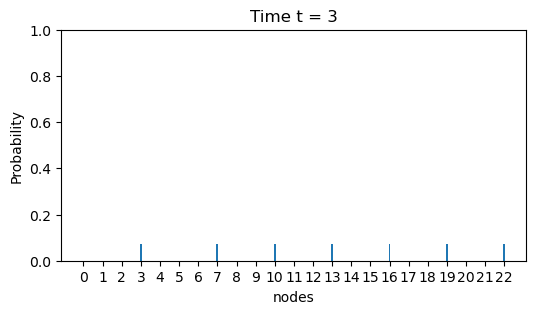}
				\caption{Probability of getting the walker at different vertices at time $t = 3$.} 
				\centering 
			\end{subfigure}
			\hspace{.5cm}
			\begin{subfigure}{.48\textwidth}
				\includegraphics[scale = .5]{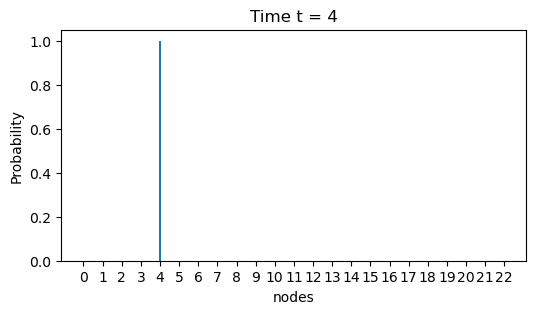}
				\caption{Probability of getting the walker at different vertices at time $t = 4$.} 
				\centering 
			\end{subfigure}
			\caption{In this figure, we represent the probability of getting the walker at vertex $k, 0 \leq k \leq 22$ in a glued path graph $P_4^{(7)}$ with $22$ vertices at different time instances $t = 0, 1, \dots, 4$ with bar diagrams. Suppose the walker started at vertex $0$ at $t = 0$. At $t = 4$, it returns to the vertex $4$.}
			\label{PST_in_glued_graph}
		\end{figure}

	\section{Conclusion}

		In this article, we study PST based on Markovian quantum walk on the path graphs and the cycle graphs with an arbitrary number of vertices. We justify that this PST in cycle and path graphs is more efficient in quantum communication than the original idea of PST. PST based on the continuous-time quantum walk is limited in path graph with at most $3$ vertices and cycle graph with $4$ vertices. On the other hand, the path and cycle graphs with an arbitrary number of vertices exhibit Markovian quantum walk-based PST. Hence, we can overcome this limitation.
		
		The idea of PST based on Markovian quantum walk is new and different from the other proposals of PST. It is well-known that the antipodal vertices of a hypercube graph allow PST based on the continuous-time quantum walk. In our case, only a two-dimensional hypercube graph which is a cycle graph with four vertices allows PST. Hypercube graphs do not allow PST based on Markovian quantum walk in them when their dimension is more than two. Interestingly, all the vertices in the hypercube graph of dimension 4 are periodic at time $t = 12$. The central vertex of the star graphs is periodic at time $t = 2$. The non-central vertices of the star graphs are periodic at time $t = 4$. Tensor product of two path graphs $P_i \otimes P_j$ allows PST when $i + j \leq 7$. Computer programs identify these graphs supporting PST or having periodic vertices in them.
		
		Developing a sufficient condition for PST on general graphs will be an interesting work. An interested reader may attempt it.

	\section*{Funding}
		This work is supported by a project entitled ``Transmission of quantum information using perfect state transfer" (Grant no. CRG/2021/001834) sponsored by the Anusandhan National Research Foundation (ANRF) earlier the Science and Engineering Research Board (SERB).

	\end{onehalfspace}

%	\bibliographystyle{unsrt} 
%	\bibliography{library}

\end{document}